\crefname{equation}{}{}
\setlist{itemsep=0pt}
\newtheorem{thm}{Theorem}
\newtheorem{clm}{Claim}
\newtheorem{lem}{Lemma}
\theoremstyle{remark}
\newtheorem*{rem}{Remark}
\let\cal\mathcal
\newcommand{\A}{{\cal A}}
\newcommand{\B}{{\cal B}}
\newcommand{\R}{{\cal R}}
\newcommand{\f}{{\cal F}}
\newcommand{\minplus}{\mbox{$(\min,+)$}~}
\newcommand{\maxplus}{\mbox{$(\max,+)$}~}
\newcommand{\arithm}[1]{L(#1)}
\newcommand{\ptree}{f} 
\newcommand{\mtree}{MST}
\newcommand{\dtree}{\vec{f}}
\newcommand{\stree}{{\cal T}}
\newcommand{\dstree}{\vec{\cal T}}
\newcommand{\U}{U}
\newcommand{\V}{V}
\newcommand{\bound}[1]{\tau(#1)}
\newcommand{\Good}{\U^{\ast}}
\newcommand{\prob}[1]{\ensuremath{\mathrm{Pr}\left\{{#1}\right\}}}
\newcommand{\euler}{\mathrm{e}}
\newcommand{\redE}{E_{\mathrm{red}}}
\begin{document}
\title{Greedy Can Beat Pure Dynamic Programming\thanks{Research supported by the DFG grant
    JU~3105/1-1 (German Research Foundation).}}

\author{Stasys Jukna\thanks{Affiliated
    with the Institute of Data Science and Digital Technologies, Faculty of Mathematics and Informatics of Vilnius University, Lithuania. E-mail: {\sf stjukna@gmail.com}} \and  Hannes Seiwert\thanks{E-mail: {\sf seiwert@thi.cs.uni-frankfurt.de} }}
\date{\footnotesize\it  Institut f\"ur Informatik, Goethe Universit\"at\\[-0.3em]
    Frankfurt am Main, Germany
}

\maketitle

\begin{abstract}
  Many dynamic programming algorithms for discrete 0-1 optimization
  problems are ``pure'' in that their recursion equations only use
  min/max and addition operations, and do not depend on actual input
  weights.  The well-known greedy algorithm of Kruskal solves the
  minimum weight spanning tree problem on $n$-vertex graphs using only
  $O(n^2\log n)$ operations. We prove that any pure DP algorithm for
  this problem must perform $2^{\Omega(\sqrt{n})}$ operations. Since
  the greedy algorithm can also badly fail on some optimization
  problems, easily solvable by pure DP algorithms, our result shows
  that the computational powers of these two types of algorithms are
  incomparable.\\

{\bf Keywords:} Spanning tree; arborescence; arithmetic circuit; tropical circuit; lower bound
\end{abstract}

\section{Introduction and result}

A dynamic programming (DP) algorithm is \emph{pure} if it only uses
min or max and addition operations in its recursion equations, and the
equations do not depend on the actual values of the input weights.
Notable examples of such DP algorithms include the
Bellman--Ford--Moore algorithm for the shortest $s$-$t$ path
problem~\cite{ford,Moore1957,bellman}, the Floyd--Warshall algorithm
for the all-pairs shortest paths problem~\cite{floyd,warshall}, the
Held--Karp algorithm for the Travelling Salesman
Problem~\cite{held62}, and the Dreyfus--Levin--Wagner algorithm for
the weighted Steiner tree problem~\cite{dreyfus,levin}.

It is well known and easy to show that, for some optimization
problems, already pure DP algorithms can be much better than greedy
algorithms.  Namely, there are a lot of optimization problems which
are easily solvable by pure DP algorithms (exactly), but the greedy
algorithm cannot even achieve any finite approximation factor: maximum
weight independent set in a path, or in a tree, the maximum weight
simple $s$-$t$ path in a transitive tournament problem, etc.

In this paper, we show that the converse direction also holds: for
some optimization problems, greedy algorithms can also be much better
than pure dynamic programming. So, the computational powers of greedy
and pure DP algorithms are \emph{incomparable}. We will show that the
gap occurs on the (undirected) minimum weight spanning tree problem,
by first deriving an exponential lower bound on the monotone
arithmetic circuit complexity of the corresponding polynomial.

Let $K_n$ be the complete undirected graph on
$[n]=\{1,\ldots,n\}$. Assume that edges $e$ have their associated
nonnegative real weights $x_{e}$, considered as formal variables. Let
$\stree_n$ be the family of all $|\stree_n|=n^{n-2}$ spanning trees
$T$ in $K_n$, each viewed as its \emph{set} of edges.

It is well known that $\stree_n$ is the family of bases of a matroid,
known as \emph{graphic matroid}; so, on this family of feasible
solutions, both optimization problems (minimization and maximization)
can be solved by standard greedy algorithms. On the other hand, the
theorem below states that the polynomial corresponding to $\stree_n$
has exponential monotone arithmetic circuit complexity; due to special
properties of this polynomial, the same lower bound also holds on the
number of operations used by pure DP algorithms solving minimization
and maximization problems on $\stree_n$ (see~\cref{lem:reduct}).

The \emph{spanning tree polynomial} (known also as the \emph{Kirchhoff
  polynomial} of $K_n$) is the following homogeneous, multilinear
polynomial of degree $n-1$:
\[
\ptree_n(x) = \sum_{T\in\stree_n} \prod_{e\in T}x_e\,.
\]
For a multivariate polynomial $f$ with positive coefficients, let
$\arithm{f}$ denote the minimum size of a monotone arithmetic
$(+,\times)$ circuit computing~$f$. Our goal is to prove that
$\arithm{\ptree_n}$ is exponential in~$n$.

\begin{thm}\label{thm:main}
  \[
  \arithm{\ptree_n}=2^{\Omega(\sqrt{n})}\,.
  \]
\end{thm}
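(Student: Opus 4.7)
The plan is to follow the classical Jerrum--Snir template for monotone arithmetic lower bounds, specialized to the combinatorics of spanning trees of $K_n$.

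The first step would be to invoke a standard balancing lemma for monotone $(+,\times)$ circuits (in the spirit of Schnorr and Jerrum--Snir): any monotone circuit of size $s$ computing a homogeneous polynomial $f$ of degree $d$ can be rewritten as $f = \sum_{i=1}^{O(s)} g_i h_i$, where $g_i, h_i$ are monotone homogeneous polynomials whose degrees both lie in the balanced range $[d/3,\,2d/3]$. Applied to $\ptree_n$ (which is homogeneous of degree $n-1$), this reduces the question to lower bounding the number of balanced product pairs needed in such a representation.

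Next I would extract the combinatorial ``rectangle'' structure forced by multilinearity and positivity. Since $\ptree_n$ has $0/1$-coefficients and no cancellation is possible in a monotone circuit, each product $g_i h_i$ must itself be multilinear with every monomial being a monomial of $\ptree_n$. Writing $g_i=\sum_{A\in\A_i} c_A^{(i)}\prod_{e\in A} x_e$ and analogously $h_i$ over a support $\B_i$, the pair $(\A_i,\B_i)$ forms a combinatorial rectangle inside $\stree_n$: for every $A\in\A_i$ and $B\in\B_i$ we have $A\cap B=\emptyset$ and $A\cup B\in\stree_n$. The degree balance forces $|A|\in[(n-1)/3,\,2(n-1)/3]$ in every such rectangle.

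The heart of the argument is then a combinatorial bound showing that any such balanced rectangle covers at most $|\stree_n|\cdot 2^{-\Omega(\sqrt n)}$ spanning trees. The key observation is that each $A\in\A_i$ is a forest on $[n]$ determining a component partition $\pi_A$ with $n-|A|$ parts, and every $B\in\B_i$ must simultaneously complete each such $\pi_A$ into a spanning tree of $K_n$. Using generalized Cayley-type enumerations for spanning forests with a prescribed component structure, and carefully tracking how the mutual consistency between $\A_i$ and $\B_i$ constrains both sides, one bounds $|\A_i|\cdot|\B_i|$. Since the rectangles must cover all $|\stree_n|=n^{n-2}$ spanning trees, dividing by the maximum rectangle size yields $s\geq 2^{\Omega(\sqrt n)}$.

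The main obstacle is the rectangle-size bound. A naive attempt that forces all $A\in\A_i$ to share a common partition of $[n]$ is overly restrictive, since a rectangle genuinely permits $A$'s with differing partitions; on the other hand the flexibility is not arbitrary, because each $B\in\B_i$ must remain a valid spanning completion across this whole varying family. Identifying the right structural invariant of a rectangle that loses only a $2^{\sqrt n}$ factor—while remaining compatible with the full combinatorial flexibility allowed—is the delicate step of the proof.
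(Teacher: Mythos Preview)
Your reduction to balanced rectangles is correct and matches the paper exactly: the Hyafil--Valiant decomposition together with multilinearity and monotonicity yields a cover of $\stree_n$ by balanced rectangles $\A_i\lor\B_i$, and the theorem follows from a lower bound on the number of such rectangles. So the framework is right.

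The gap is in the rectangle-size bound, which you yourself flag as the ``delicate step'' without actually carrying it out. The proposed route---bounding $|\A_i|\cdot|\B_i|$ via Cayley-type counts for spanning forests with prescribed component partitions---is not obviously workable, precisely for the reason you identify: the forests in $\A_i$ need not share a common partition, and the mutual consistency condition (every $A\cup B$ is a tree) is a global constraint that does not translate cleanly into a product bound. You have stated the difficulty but not a mechanism that overcomes it.

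The paper's actual argument is quite different from what you sketch. Rather than bounding how many spanning trees a rectangle can contain, it bounds how many \emph{star factors} a rectangle can \emph{cover}, where a star factor is a spanning forest consisting of $m\approx\sqrt{n}$ vertex-disjoint $d$-stars with $d\approx\sqrt{n}$. The key structural device is to fix one spanning star-tree $T_0=A_0\cup B_0$ inside the rectangle; its leaves split into sets $U$ (those attached via $A_0$) and $V$ (via $B_0$), and the balance condition makes both linear in $n$. A simple cycle argument then shows that any $A\in\A$ can have at most $m$ crossing edges at any single vertex of $U$ (else $A\cup B_0$ contains a $4$-cycle). Coloring edges red/blue according to whether they appear in some $A$ or some $B$, one shows that many vertices of $U$ are ``rich'' (have $\Omega(n)$ red crossing neighbors in $V$). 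Star factors covered by the rectangle are then split into those avoiding all rich centers---rare because a positive fraction of vertices are excluded as centers---and those with a rich center $z$, where the star at $z$ is forced to use at most $m$ of its $\Omega(n)$ red crossing neighbors; a hypergeometric tail bound makes this event exponentially unlikely in $d\approx\sqrt{n}$. The $\sqrt{n}$ exponent arises from the trade-off $md\approx n$ with $m\le d/32$.

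In short: your outline stops exactly where the real work begins, and the paper's solution---restricting to a carefully chosen subfamily (star factors), extracting a bipartition from a single witness tree, and invoking a hypergeometric tail bound---is not hinted at in your proposal.
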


A ``directed version'' of $\ptree_n$ is the \emph{arborescence
  polynomial}~$\dtree_n$.  An \emph{arborescence} (known also as a
\emph{branching} or a \emph{directed spanning tree}) on the vertex-set
$[n]$ is a directed tree with edges oriented away from vertex $1$ such
that every other vertex is reachable from vertex $1$.  Let $\dstree_n$
be the family of all arborescences on $[n]$. Jerrum and
Snir~\cite{jerrum} have shown that $\arithm{\dtree_n}=2^{\Omega(n)}$
holds for the arborescence polynomial
\[
\dtree_n(x) = \sum_{T\in\dstree_n}\ \prod_{ \vec{e}\in
  T}x_{\vec{e}}\,.
\]
Note that here variables $x_{i,j}$ and $x_{j,i}$ are treated as
\emph{distinct}, and cannot both appear in the same monomial.  This
dependence on orientation was crucially utilized in the argument
of~\cite[p.~892]{jerrum} to reduce a trivial upper bound $(n-1)^{n-1}$
on the number of monomials in a polynomial computed at a particular
gate till a non-trivial upper bound $(3n/4)^{n-1}$.  So, this argument
does not apply to the \emph{undirected} version~$\ptree_n$ (where
$x_{i,j}$ and $x_{j,i}$ stand for the \emph{same} variable). To handle
the undirected case, we will use an entirely different argument.

\paragraph{Relation to pure DP algorithms}

Every pure DP algorithm is just a special (recursively constructed)
\emph{tropical} \minplus or \maxplus circuit, that is, a circuit using
only min (or max) and addition operations as gates; each input gate of
such a circuit holds either one of the variables $x_i$ or a
nonnegative real number.  So, lower bounds on the size of tropical
circuits yield the same lower bounds on the number of operations used
by pure DP algorithms. For optimization problems, whose feasible
solutions all have the same cardinality, the task of proving lower
bounds on their tropical circuit complexity can be solved by proving
lower bounds on the size of monotone arithmetic circuits.

Recall that a multivariate polynomial is \emph{monic} if all its
nonzero coefficients are equal to $1$, \emph{multilinear} if no
variable occurs with degree larger than $1$, and \emph{homogeneous} if
all monomials have the same degree.  Every monic and multilinear
polynomial $f(x)=\sum_{S\in\f}\prod_{i\in S} x_i$ defines two
optimization problems: compute the minimum or the maximum of
$\sum_{i\in S}x_i$ over all $S\in\f$.

\begin{lem}[\cite{jerrum,juk14}]\label{lem:reduct}
  If a polynomial $f$ is monic, multilinear and homogeneous, then
  every tropical circuit solving the corresponding optimization
  problem defined by $f$ must have at least $\arithm{f}$ gates.
\end{lem}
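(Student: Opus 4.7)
The plan is to convert, without increasing size, any tropical circuit $C$ that solves the optimization problem for $f$ into a monotone arithmetic circuit computing the polynomial $f$ itself; by symmetry it suffices to treat the $(\min,+)$ case, with $(\max,+)$ being entirely analogous. The construction is purely syntactic: replace every min-gate of $C$ by a sum-gate, and every tropical sum-gate by a product-gate, leaving the input gates (variables $x_e$ and any nonnegative constants) untouched. This produces a monotone arithmetic circuit $C'$ with $|C'|=|C|$, and the main task becomes to argue that, modulo a clean-up step, $C'$ actually computes $f$.

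To analyse $C'$, I would attach to each gate $v$ of $C$ the finite multiset $P_v$ of formal tropical monomials produced there (each monomial being a multiset of input variables), built inductively: min-gates contribute $P_u\sqcup P_w$, and tropical sum-gates contribute the Minkowski-style sum $\{A\uplus B:A\in P_u,\,B\in P_w\}$. A direct induction shows that $C(x)=\min_{A\in P_{\mathrm{out}}}\sum_i A(i)\,x_i$, while $C'$ computes $g(x)=\sum_{A\in P_{\mathrm{out}}}\prod_i x_i^{A(i)}$. Using the correctness of $C$ and the particular form of the objective, plugging in inputs that set $x_i=0$ on a chosen set and $x_i=M$ elsewhere yields two combinatorial properties of $P_{\mathrm{out}}$: (i) for every $S\in\f$, the indicator $\mathbf{1}_S$ lies in $P_{\mathrm{out}}$ (this is where homogeneity and multilinearity enter, forcing $|S|=d$ for every $S\in\f$); and (ii) every $A\in P_{\mathrm{out}}$ contains some $S\in\f$ as a sub-multiset, so the corresponding monomial of $g$ either coincides with $\prod_{i\in S}x_i$ for some $S\in\f$, or has total degree strictly greater than $d$, or is non-multilinear.

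The hard part will be converting ``$C'$ computes some $g\ge f$'' into the desired bound $\arithm{f}\le|C|$. My plan is to argue that any tropical $C$ may be assumed, at no extra gate cost, to be in a canonical irredundant form: duplicate copies of a tropical monomial at a min-gate can be collapsed, and any subcomputation whose contribution to $P_{\mathrm{out}}$ consists entirely of strictly dominated monomials (of weight $>d$ or non-multilinear) can be pruned or replaced by a dominating sibling. After this clean-up, $P_{\mathrm{out}}$ reduces to $\{\mathbf{1}_S:S\in\f\}$ with each $S$ appearing exactly once, so the arithmetic simulation $C'$ computes precisely $f$, yielding $\arithm{f}\le|C|$. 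All three assumptions play an essential role: homogeneity is what identifies the ``good'' weight-$d$ tropical monomials with the feasible sets in $\f$, multilinearity rules out hidden valid monomials with repeated variables, and monicity ensures the target coefficients are all $1$ so that no renormalization is required.
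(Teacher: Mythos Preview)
Your approach is essentially the one the paper sketches (the paper does not give a full proof either, only a two-line outline citing \cite{jerrum,juk14}): convert the tropical circuit to a monotone arithmetic circuit by the syntactic substitution $\min\mapsto +$, $+\mapsto\times$, and then prune. The only cosmetic difference is that the paper phrases the pruning as ``removing some of the edges entering $+$-gates'' on the arithmetic side, whereas you do it on the tropical side; since $+$-gates in $C'$ are exactly the former $\min$-gates of $C$, these are the same operation.

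There is one genuine soft spot in your write-up. Your homogeneity-based pruning of ``strictly dominated'' sub-computations is correct and is precisely what the paper's sketch invokes: it leaves a circuit whose output has support exactly $\{\prod_{i\in S}x_i:S\in\f\}$. But your further claim that local collapsing of duplicates at individual $\min$-gates forces every $\mathbf 1_S$ to occur \emph{exactly once} in $P_{\mathrm{out}}$ does not follow. Multiplicity $>1$ can arise from global sharing --- the same subcircuit feeding two different $\min$-branches higher up --- and no local rule at a single gate sees this. (Concretely: with $u=\min(a,b)$, $w=\min(a,c)$ and output $\min(u,w)$, the monomial computed at $a$ appears twice in $P_{\mathrm{out}}$, yet no single $\min$-gate has duplicate children.) The cited references handle this either by a more careful global edge-selection argument, or by observing that the monotone-circuit lower bound one ultimately applies (here, the rectangle bound of \cref{lem:tau}) depends only on the \emph{support} of the polynomial, so computing $\sum_{S\in\f} c_S\prod_{i\in S}x_i$ with arbitrary $c_S>0$ already suffices. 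Your sketch becomes complete once you make one of these two points explicit; the rest is fine and matches the paper.
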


This fact was proved by Jerrum and Snir~\cite[Corollary~2.10]{jerrum};
see also \cite[Theorem~9]{juk14} for a simpler proof. The proof idea
is fairly simple: having a tropical circuit, turn it into a monotone
arithmetic $(+,\times)$ circuit, and use the homogeneity of $f$ to
show that, after removing some of the edges entering $+$-gates, the
resulting circuit will compute our polynomial~$f$.

\paragraph{Greedy can beat pure DP}
The (weighted) \emph{minimum spanning tree} problem $\mtree_n(x)$ is,
given an assignment of nonnegative real weights to the edges of $K_n$,
compute the minimum weight of a spanning tree of $K_n$, where the
weight of a graph is the sum of weights of its edges. So, this is
exactly the minimization problem defined by the spanning tree
polynomial~$\ptree_n$:
\[
\mtree_n(x) = \min_{T\in\stree_n} \ \sum_{e\in T}x_e\,.
\]
Since the family $\stree_n$ of feasible solutions of this problem is
the family of bases of the (graphic) matroid, the problem \emph{can}
be solved by the standard greedy algorithm. In particular, the
well-known greedy algorithm of Kruskal~\cite{kruskal} solves $\mtree_n$
using only $O(n^2\log n)$ operations.

On the other hand, since the spanning tree polynomial $\ptree_n$ is
monic, multilinear and homogeneous, \cref{thm:main} together with
\cref{lem:reduct} implies that any \minplus circuit solving the
problem $\mtree_n$ must have at least
$\arithm{\ptree_n}=2^{\Omega(\sqrt{n})}$ gates and, hence, at least so
many operations must be performed by any pure DP algorithm solving
$\mtree_n$. This gap between pure DP and greedy algorithms is our main
result.

\paragraph{Directed versus undirected spanning trees}
  The arborescence polynomial~$\dtree_n$ is also monic, multilinear
  and homogeneous, so that \cref{lem:reduct}, together with the above
  mentioned lower bound on $\arithm{\dtree_n}$ due to Jerrum and
  Snir~\cite{jerrum}, also yields the same lower bound on the size of
  \minplus circuits solving the minimization problem on the family
  $\dstree_n$ of arborescences.

  But this does not separate DP from greedy, because the downward
  closure of $\dstree_n$ is \emph{not} a matroid: it is only an
  intersection of two matroids (see Edmonds~\cite{edmonds67}). So,
  greedy algorithms are only able to \emph{approximate} the
  minimization problem on $\dstree_n$ within the factor $2$.
  Polynomial time algorithms solving this problem \emph{exactly} were
  found by several authors, starting from Edmonds~\cite{edmonds73}.
  The fastest algorithm for the problem is due to
  Tarjan~\cite{tarjan}, and solves it in time $O(n^2\log n)$,
  that is, with the same time complexity as Kruskal's greedy algorithm
  for undirected graphs~\cite{kruskal}.  But these are not greedy
  algorithms. So, $\dstree_n$ does not separate standard, matroid based greedy and pure DP
  algorithms.

\section{Proof of Theorem~\ref{thm:main}}

A \emph{rectangle} is specified by giving two families $\A$ and $\B$
of forests in the complete graph $K_n$ on $[n]=\{1,\ldots,n\}$ such
that for all forests $A\in\A$ and $B\in\B$ (viewed as sets of their
edges), we have $A\cap B=\emptyset$ (the forests are edge-disjoint),
and $A\cup B$ is a spanning tree of $K_n$.  The rectangle itself is
the family
\[
\R=\A\lor\B:=\{A\cup B\colon \mbox{$A\in\A$ and $B\in\B$}\}
\]
of all resulting spanning trees. A rectangle $\R=\A\lor\B$ is
\emph{balanced} if $(n-1)/3\leq |A|,|B|\leq 2(n-1)/3$ holds for all
forests $A\in\A$ and $B\in\B$; recall that every spanning tree of a
graph on $n$ vertices has $n-1$ edges.  Let $\bound{n}$ be the minimum
number of balanced rectangles whose union gives the family of all
spanning trees of~$K_n$.

\begin{lem}\label{lem:tau}
  For the spanning tree polynomial $\ptree_n$, we have
  $\arithm{\ptree_n}\geq \bound{n}$.
\end{lem}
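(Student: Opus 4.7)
The plan is to show that any monotone arithmetic circuit $\Phi$ of size $s$ computing $\ptree_n$ yields a cover of $\stree_n$ by at most $s$ balanced rectangles, giving $\bound{n} \leq s = \arithm{\ptree_n}$. After the standard homogenization reduction (whose polynomial size overhead is negligible), I may assume that every gate $g$ of $\Phi$ computes a homogeneous polynomial of degree $\deg(g)$, with $\deg(g)=\deg(g_1)+\deg(g_2)$ at $\times$-gates and $\deg(g)=\deg(g_1)=\deg(g_2)$ at $+$-gates.

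To every gate $v$ with $\deg(v)\in[L,H]$, where $L=(n-1)/3$ and $H=2(n-1)/3$, I attach the rectangle $\R_v := \A_v \lor \B_v$ with
\[
\A_v := \{ A : A \text{ is a forest of size } \deg(v),\ \textstyle\prod_{e \in A} x_e \text{ is a monomial of } \Phi_v \},
\]
\[
\B_v := \{ B : B \text{ is a forest of size } n-1-\deg(v),\ \textstyle\prod_{e \in B} x_e \text{ is a monomial of } D_v \},
\]
and the \emph{outer polynomial} $D_v$ is defined by replacing $v$ in $\Phi$ by a fresh variable $z$, expanding the resulting output as $\Phi(x;z)=\sum_{k\geq 0} z^k Q_k(x)$, and setting $D_v := Q_1$. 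Balance of $\R_v$ is immediate from $\deg(v)\in[L,H]$.

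The central step is to verify that $\R_v$ is an \emph{actual} rectangle, i.e.\ that $A\cap B=\emptyset$ and $A\cup B\in\stree_n$ for every $(A,B)\in\A_v\times\B_v$. Since $\ptree_n(x)=\sum_k \Phi_v(x)^k\,Q_k(x)$ and $\deg(v)\geq L\geq 1$, the polynomial $\Phi_v^k$ contains non-multilinear monomials for each $k\geq 2$; because $\ptree_n$ is multilinear and the circuit is monotone (hence no cancellation is possible), we are forced to have $Q_k=0$ for all $k\geq 2$, yielding the clean decomposition $\ptree_n = R + \Phi_v\cdot D_v$ with $R:=Q_0$ collecting the parse-tree contributions that avoid $v$. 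Both $R$ and $\Phi_v\cdot D_v$ are monotone, so $\Phi_v\cdot D_v\leq \ptree_n$ coefficient-wise, and this forces $\Phi_v\cdot D_v$ to be multilinear and supported on spanning-tree monomials. Any pair $(A,B)\in\A_v\times\B_v$ with $A\cap B\neq\emptyset$ would produce a non-multilinear monomial in $\Phi_v\cdot D_v$, and any pair with $A\cup B\notin\stree_n$ would produce a monomial outside the support of $\ptree_n$; both are ruled out.

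It remains to show that every $T\in\stree_n$ lies in some $\R_v$. Fix a parse tree $P$ for the monomial $m_T=\prod_{e\in T}x_e$ in $\Phi$; this is a rooted tree with $n-1$ leaves (one per edge of $T$) whose internal nodes have out-degree $1$ at $+$-gates and $2$ at $\times$-gates. The classical centroid argument — walk down from the root, at each $\times$-gate descending to the child with more leaves, and stop at the first node whose subtree holds at most $2(n-1)/3$ leaves — produces a node $v\in P$ whose subtree contains $k$ leaves for some $k\in(L,H]$. Then $v$ is a gate of $\Phi$ with $\deg(v)=k\in[L,H]$; its subtree in $P$ witnesses $m_A\in\Phi_v$ for $A\subseteq T$ the sub-forest on those $k$ leaves, and the remainder of $P$ witnesses $m_B\in D_v$ for $B=T\setminus A$ with $|B|=n-1-k\in[L,H)$. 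Hence $T=A\cup B\in\R_v$. Since the number of gates $v$ with $\deg(v)\in[L,H]$ is at most $|\Phi|$, the collection $\{\R_v\}$ is the desired cover. The main hurdle will be the rectangle-property verification: it crucially couples monotonicity (to kill the $\Phi_v^k Q_k$ contributions for $k\geq 2$) with multilinearity of $\ptree_n$ (to pin down $\Phi_v\cdot D_v$ on spanning-tree monomials with pairwise-disjoint factor supports), a non-trivial interplay that transfers the combinatorial structure of $\ptree_n$ through the gate $v$ to its outer polynomial $D_v$.
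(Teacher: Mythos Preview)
Your argument is correct in spirit and arrives at the same decomposition the paper uses, but you take a more hands-on route. The paper simply invokes the Hyafil--Valiant decomposition as a black box: any homogeneous polynomial of degree $n-1$ computed by a monotone circuit of size $t$ can be written as $\sum_{i=1}^{t} g_i h_i$ with each $g_i,h_i$ nonnegative and homogeneous of degree at most $2(n-1)/3$, and then reads off the rectangles directly from the $g_i,h_i$. Your gate-by-gate construction with the outer polynomial $D_v$ and the parse-tree centroid argument is essentially a self-contained proof of that black box in the monotone setting; your observation that $Q_k=0$ for $k\geq 2$ (forced by multilinearity plus no cancellations) is the one extra ingredient that makes the single-gate decomposition $f_n=Q_0+\Phi_v\cdot D_v$ clean. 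So the two proofs are morally the same, with yours unpacking what the paper cites.

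One point needs tightening. You write that ``the standard homogenization reduction (whose polynomial size overhead is negligible)'' lets you assume every gate is homogeneous. For the \emph{exact} inequality $\arithm{\ptree_n}\geq\bound{n}$ in the lemma, a polynomial blow-up is not negligible: it would only yield $\arithm{\ptree_n}\geq\bound{n}/\mathrm{poly}(n)$. The fix is that no homogenization is needed at all: in a \emph{monotone} circuit computing a homogeneous polynomial, every reachable gate already computes a homogeneous polynomial. (Induct from the output: if $g_1+g_2$ is homogeneous with $g_1,g_2$ nonnegative, each summand is homogeneous of the same degree; if $g_1\cdot g_2$ is homogeneous with $g_1,g_2$ nonnegative and nonzero, each factor is homogeneous.) With this observation your count of rectangles is at most the original circuit size $s$, and the lemma follows exactly. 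A related minor point: when you say the parse tree has ``$n-1$ leaves (one per edge of $T$)'' you are implicitly ignoring constant input gates; the centroid walk should count only \emph{variable} leaves, which is what matches $\deg(v)$.
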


\begin{proof}
  Let $t=\arithm{\ptree_n}$. The spanning tree polynomial $\ptree_n$
  is multilinear and homogeneous of degree $n-1$: every spanning tree
  $T$ of $K_n$ has $|T|=n-1$ edges. Since the polynomial $\ptree_n$ is
  homogeneous of degree $n-1$, and since $\ptree_n$ can be computed by
  a monotone arithmetic circuit of size $t$, the well-known
  decomposition result, proved by Hyafil~\cite[Theorem~1]{hyafil} and
  Valiant~\cite[Lemma~3]{valiant80}, implies that $\ptree_n$ can be
  written as a sum $\ptree_n=g_1\cdot h_1+\cdots+g_t\cdot h_t$ of
  products of nonnegative homogeneous polynomials, each of degree at
  most $2(n-1)/3$; a polynomial is \emph{nonnegative} if it has no
  negative coefficients.

  Every monomial of $\ptree_n$ is of the form $\prod_{e\in T}x_e$ for
  some spanning tree~$T$. Since the polynomials $g_i$ and $h_i$ in the
  decomposition of $\ptree_n$ are nonnegative, there can be no
  cancellations. This implies that all the monomials of $g_i\cdot h_i$
  must be also monomials of $\ptree_n$, that is, must correspond to
  spanning trees. Moreover, since the polynomial $\ptree_n$ is
  multilinear, the forests of $g_i$ must be edge-disjoint from the
  forests of $h_i$. So, if we let $\A_i$ be the family of forests
  corresponding to monomials of the polynomial $g_i$, and $\B_i$ be
  the family of forests corresponding to monomials of the
  polynomial~$h_i$, then $\A_1\lor\B_1,\ldots,\A_t\lor\B_t$ are
  balanced rectangles, and their union gives the family of all spanning trees
  of $K_n$. This shows $\bound{n}\leq t=\arithm{\ptree_n}$, as
  desired.
\end{proof}

So, it is enough to prove an exponential lower bound on $\bound{n}$.
When doing this, we will concentrate on spanning trees of $K_n$ of a
special form.  Let $m$ and $d$ be positive integer parameters
satisfying $(d+1)m=n$, $m=\Theta(\sqrt{n})$ and $m\leq d/32$; we will
specify these parameters later.

A \emph{star} is a tree with one vertex, the \emph{center}, adjacent
to all the others, which are \emph{leaves}. A $d$-\emph{star} is a
star with $d$ leaves.  A \emph{spanning star-tree} consists of $m$
vertex-disjoint $d$-stars whose centers are joined by a path.  A
\emph{star factor} is a spanning forest of $K_n$ consisting of $m$
vertex-disjoint $d$-stars. Note that each spanning star-tree contains
a unique star factor (obtained by removing edges between star
centers).

Let $\f$ be the family of all star factors of $K_n$. For a rectangle
$\R$, let $\f_{\R}$ denote the family of all star factors $F$ of $K_n$
contained in at least one spanning tree of $\R$; in this case, we also
say that the factor $F$ is \emph{covered} by the rectangle~$\R$.

\begin{lem}\label{lem:main}
  There is an absolute constant $c>0$ such that for every balanced
  rectangle $\R$, we have $|\f_{\R}|\leq |\f|\cdot 2^{-c\sqrt{n}}$.
\end{lem}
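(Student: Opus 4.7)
For each $F \in \f_{\R}$, fix a witness $(A_F, B_F) \in \A \times \B$ with $F \subseteq A_F \cup B_F$, and set $F_A := F \cap A_F$, $F_B := F \cap B_F$, so that $F = F_A \cup F_B$ is a disjoint union. Since $|F| = n-m$ and $|A_F|, |B_F| \le 2(n-1)/3$, balancedness of $\R$ forces $|F_A|, |F_B| \in [n/3-m,\, 2(n-1)/3]$. Writing $a_i := |S_i \cap A_F|$ for the $i$th star $S_i$ of $F$, both $F_A$ and $F_B$ are disjoint unions of (possibly empty) substars centered at the $m$ centers $c_1, \ldots, c_m$ of $F$, with $\sum_i a_i = |F_A|$. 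So each covered star factor comes with a split pattern $(a_1,\ldots,a_m)$ and a pair of substar-packings $(F_A, F_B)$ sharing centers.

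The plan is to bound $|\f_{\R}|$ by a counting argument and compare against $|\f| = n!/(m!\,(d!)^m)$. The naive union bound
\[
|\f_{\R}| \le \sum_{A \in \A} \bigl|\{F \in \f : F \cap A \text{ is a substar-subforest of } A \text{ with } |F \cap A| \ge n/3-m\}\bigr|
\]
loses a factor of $|\A|$, so I would instead exploit the product structure $\R = \A \lor \B$ via a double counting that keeps the witness $A$ and the packings $F_A, F_B$ together: each $F \in \f_{\R}$ yields a triple $(A, F_A, F_B)$ in which $F_A$ is a substar-subforest of $A$, $F_B$ is a substar-subforest of some $B \in \B$, and the two share the same $m$ centers. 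The key structural input is that every $A \in \A$ is a forest with at most $2(n-1)/3$ edges (hence at least $n/3 + 1$ connected components), which severely restricts the substar-packings available inside $A$ relative to those available in $K_n$.

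\textbf{The main obstacle} is handling the full range of split patterns $(a_1, \ldots, a_m)$ permitted by a balanced witness. I expect a case split on the number of \emph{mixed} stars (those with $1 \le a_i \le d-1$). In the many-mixed regime, each mixed star forces a nontrivial leaf partition $L_i = L_i^A \cup L_i^B$ that must simultaneously fit in $A$ as a substar at $c_i$ and in $B$ as a substar at $c_i$, yielding a saving per mixed star. In the few-mixed regime, balance forces $\Theta(m)$ stars to be \emph{pure-$A$} (i.e., $a_i = d$), so that $A$ must contain a packing of $\Theta(m)$ disjoint complete $d$-stars using roughly $n/2$ of its at most $2(n-1)/3$-edge budget, which rigidifies $A$ and restricts which star factors $F$ can extend a given $F_A$. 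The assumption $m \le d/32$ calibrates the bookkeeping in both regimes, and $m = \Theta(\sqrt n)$ converts the per-star savings into the advertised $2^{-c\sqrt n}$ bound. The technical core of the argument is therefore a combinatorial lemma counting substar-packings inside a sparse forest, which I expect to be the main work.
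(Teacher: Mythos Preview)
Your setup (the red/blue split $F=F_A\cup F_B$ and the per–star pattern $(a_1,\dots,a_m)$) is fine, but the proposal never actually \emph{uses} the rectangle axiom, and that is where the whole argument lives. The only place you invoke $\R=\A\lor\B$ is to say that the coloring is well-defined; after that, your constraints are all about a single witness $A_F$ being a sparse forest. As you yourself note, exploiting ``$A_F$ has at most $2(n-1)/3$ edges'' forces a union bound over $\A$, and your alternative (``double counting that keeps the witness $A$ and the packings $F_A,F_B$ together'') is never made precise. In the many-mixed case, once the coloring is fixed the partition $L_i=L_i^A\cup L_i^B$ is \emph{forced}, not chosen, so there is no evident per-star saving; in the few-mixed case, ``$A$ must contain $\Theta(m)$ disjoint $d$-stars'' is a statement about one witness and says nothing about how many $F$'s are covered by $\R$ as a whole. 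The promised ``combinatorial lemma counting substar-packings inside a sparse forest'' would bound, for each fixed $A$, the number of $F$'s with $F_A\subseteq A$ --- which is exactly the naive bound you rejected.

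What the paper does, and what your plan is missing, is to extract a \emph{global} constraint from the rectangle property. Fix one spanning star-tree $T_0=A_0\cup B_0\in\R$ and let its non-center vertices split as $U\cup V$ according to whether their unique pendant edge lies in $A_0$ or $B_0$. The crucial observation is: for \emph{every} $A\in\A$ and every $u\in U$, the forest $A$ has at most $m$ edges into $V$; otherwise two such edges would close a $4$-cycle with two pendant edges of $B_0$, contradicting that $A\cup B_0$ is a tree. This is a constraint on all of $\A$ simultaneously, obtained by playing an arbitrary $A$ against the single fixed $B_0$. It translates directly into: any covered star factor whose center $z$ lies in $U$ can have at most $m$ of its $d$ leaves among the red $V$-neighbours of $z$. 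For ``rich'' centers (those with $\ge n/16$ such neighbours) this is a hypergeometric tail event of probability $2^{-\Omega(d)}$, and since a constant fraction of $U$ is rich, star factors avoiding all rich centers are also rare. Your mixed/pure dichotomy does not surface this degree constraint, and without it I do not see how either branch of your case split closes.
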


Note that this lemma gives a lower bound $\bound{n}\geq 2^{c\sqrt{n}}$
on the minimum number of balanced rectangles containing all spanning
trees of $K_n$. Indeed, let $\R_1,\ldots,\R_t$ be $t=\bound{n}$
balanced rectangles whose union is the family of all spanning trees of
$K_n$. Every star factor $F\in\f$ is contained in at least one
spanning tree (in fact, in many of them). So, every star factor
$F\in\f$ must be covered by at least one of these $t$ rectangles. But
\cref{lem:main} implies that none of these rectangles can cover more
than $h:=|\f|\cdot 2^{-c\sqrt{n}}$ star factors $F\in\f$. So, we need
$\bound{n}=t\geq |\f|/h\geq 2^{c\sqrt{n}}$ rectangles.  Together with
\cref{lem:tau}, this yields the desired lower bound
$\arithm{\ptree_n}\geq 2^{c\sqrt{n}}$ on the monotone arithmetic
circuit complexity of the spanning tree polynomial~$\ptree_n$.

The rest of the paper is devoted to the proof of~\cref{lem:main}.

\begin{proof}[Proof of~\cref{lem:main}]
  We can construct every star factor $F\in\f$ using the following
  procedure.
  \begin{enumerate}
  \item Choose a subset of $m$ centers in $[n]$; $\binom{n}{m}$
    possibilities.
  \item Divide the remaining $n-m$ vertices into $m$ blocks of size
    $d$, and connect all vertices of the $i$th block to the $i$th
    largest of the chosen centers; there are
    $\binom{n-m}{d,\ldots,d}=\frac{(n-m)!}{d!^m}$ possibilities to do
    this.
  \end{enumerate}
  Since different realizations of this procedure lead to different
  star factors, we have
  \begin{equation}\label{eq:all}
    |\f|=\binom{n}{m}\frac{(n-m)!}{d!^m}\,.
  \end{equation}

  Fix a balanced rectangle $\R=\A\lor\B$ containing at least one
  spanning star-tree $T_0=A_0\cup B_0$ with $A_0\in\A$ and $B_0\in\B$, and let $c_1,\ldots,c_m$ be
  the centers of stars of $T_0$. Every vertex $v\in
  [n]\setminus\{c_1,\ldots,c_m\}$ is connected in $T_0$ by a
  \emph{unique} edge $e_v$ to one of the centers
  $c_1,\ldots,c_m$. This gives us a partition $\U\cup \V$ of the
  vertices in $[n]\setminus\{c_1,\ldots,c_m\}$ into two sets
  determined by the forests $A_0$ and $B_0$:
  \[
  \U=\{v\colon e_v\in A_0\}\ \ \mbox{ and }\ \ \V=\{v\colon e_v\in
  B_0\}\,.
  \]
  We will concentrate on the bipartite complete subgraph $\U\times\V$
  of $K_n$, and call the edges of $K_n$ lying in this subgraph
  \emph{crossing edges}.  Since our rectangle $\R$ is balanced, we
  know that both $|A_0|$ and $|B_0|$ lie between $(n-1)/3$ and
  $2(n-1)/3$. So, since $m=o(n)$, for $n$ large enough, we have
  \begin{equation}\label{eq:balance}
    |\U|,|\V|\geq \tfrac{1}{3}(n-1)-m \geq \tfrac{1}{4}n\,.
  \end{equation}

  The property that every graph $A\cup B$ with $A\in\A$ and $B\in\B$
  must be cycle-free (must be a spanning tree of $K_n$) gives the
  following restriction on the rectangle $\R=\A\lor\B$.

\begin{clm}\label{clm:comb}
  For all forests $A\in\A$ and $B\in\B$, and vertices $u\in\U$ and
  $v\in\V$, we have $|A\cap(\{u\}\times\V)|\leq m$ and
  $|B\cap(\U\times \{v\})|\leq m$.
\end{clm}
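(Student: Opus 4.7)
The plan is to exploit acyclicity of the spanning trees $A\cup B_0$ and $A_0\cup B$ (both of which belong to $\R$, since $A,A_0\in\A$ and $B,B_0\in\B$), using the anchor tree $T_0=A_0\cup B_0$ whose pendant edges $e_v$ are uniquely determined by the definition of $\U$ and $\V$.

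For the first inequality, fix $A\in\A$ and $u\in\U$, and set $S=A\cap(\{u\}\times\V)$. By definition of $\V$, for each $v\in\V$ the unique $T_0$-edge $e_v=\{v,c_{i(v)}\}$ joining $v$ to a center lies in $B_0$; and since $v$ is a leaf of $T_0$, this is the \emph{only} $B_0$-edge incident to $v$. Thus each $\{u,v\}\in S$ together with $e_v$ yields a length-$2$ path $u\text{--}v\text{--}c_{i(v)}$ inside the spanning tree $A\cup B_0$. I would then argue that the assignment $v\mapsto c_{i(v)}$ is injective on the second endpoints of $S$: if two distinct $v_1,v_2$ mapped to the same center $c$, then $\{u,v_1\},e_{v_1},e_{v_2},\{u,v_2\}$ would form a $4$-cycle in $A\cup B_0$, contradicting that $A\cup B_0$ is a tree. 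Since there are only $m$ centers, this forces $|S|\le m$.

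The second inequality is symmetric: fix $B\in\B$ and $v\in\V$, work inside the spanning tree $A_0\cup B$, and repeat the argument with sides swapped. Every edge $\{u,v\}\in B\cap(\U\times\{v\})$ pairs with the unique $A_0$-edge $f_u=\{u,c_{j(u)}\}$ at $u$; acyclicity of $A_0\cup B$ makes $u\mapsto c_{j(u)}$ injective, so at most $m$ such edges exist.

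The only real piece of bookkeeping is choosing which forest is fixed and which varies: we must pair a variable $A\in\A$ with the fixed anchor $B_0$ (so that the pendants $e_v$ are well-defined) and, symmetrically, a variable $B\in\B$ with the fixed $A_0$. Beyond getting that right, acyclicity of the chosen spanning tree does all the work, so I do not anticipate a genuine obstacle.
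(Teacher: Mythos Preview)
Your proposal is correct and essentially identical to the paper's proof: both pair a variable $A$ with the fixed $B_0$ (respectively $B$ with $A_0$), observe that each $v\in\V$ is joined in $B_0$ to exactly one center, and derive a $4$-cycle $u\text{--}v_1\text{--}c\text{--}v_2\text{--}u$ in the spanning tree $A\cup B_0$ whenever two endpoints share a center. The only cosmetic difference is that you phrase the pigeonhole step as injectivity of $v\mapsto c_{i(v)}$, while the paper phrases it as a contradiction from assuming $l\geq m+1$.
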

That is, no forest $A\in\A$ can contain more than $m$ crossing edges
incident to one vertex in $U$, and no forest $B\in\B$ can contain more
than $m$ crossing edges incident to one vertex in~$V$.

\begin{proof}
  Assume contrariwise that some vertex $u\in\U$ has $l\geq m+1$
  crossing edges $\{u,v_1\},\ldots,\{u,v_l\}$ in the forest $A$.
  Since these edges are crossing and $u\in\U$, all vertices
  $v_1,\ldots,v_l$ belong to $\V$. In the (fixed) spanning
  star-tree $T_0=A_0\cup B_0$ (determining the partition $\U\cup\V$ of
  vertices in $[n]\setminus\{c_1,\ldots,c_m\}$) each of these $l$ vertices is joined by an edge of
  the forest $B_0$ to one of the centers $c_1,\ldots,c_m$ of stars of
  $T_0$.

  Since $l > m$, some two of these vertices $v_i$ and $v_j$ must be
  joined in $B_0$ to the same center $c\in\{c_1,\ldots,c_m\}$. Since
  $\R$ is a rectangle, the graph $A\cup B_0$ must be a (spanning)
  tree. But the edges $\{u,v_i\}, \{u,v_j\}$ of $A$ together with
  edges $\{v_i,c\}, \{v_j,c\}$ of $B_0$ form a cycle $u\to v_i\to c\to
  v_j\to u$ in $A\cup B_0$, a contradiction.

  The proof of the inequality $|B\cap(\U\times \{v\})|\leq m$ is the
  same by using the forest $A_0$ instead of~$B_0$.
\end{proof}

So far, we only used one fixed spanning tree $T_0$ in the rectangle
$\R$ to define the subgraph $\U\times\V$ of $K_n$. We now use the
entire rectangle $\R=\A\lor\B$ to color the \emph{edges} of $K_n$ in
red and blue. When doing this, we use the fact that the sets
$E_{\A}:=\bigcup_{A\in\A} A$ and $E_{\B}:=\bigcup_{B\in\B} B$ of edges
of $K_n$ must be disjoint:
\begin{itemize}
\item Color an edge $e\in K_n$ \emph{red} if $e\in E_{\A}$, and color $e$
  \emph{blue} if $e\in E_{\B}$.
\end{itemize}
This way, the edges of every spanning tree $T\in\R$ will receive their
colors. The remaining edges of $K_n$ (if there are any) can be colored
arbitrarily.

Recall that an edge $e$ of $K_n$ is \emph{crossing} if
$e\in\U\times\V$. Assume that at least half of the crossing edges is
colored in \emph{red}; otherwise, we can consider blue edges. This
assumption implies that the set $\redE\subseteq \U\times\V$ of red
crossing edges has $|\redE|\geq \tfrac{1}{2}|\U\times\V|$ edges.  For
a vertex $u\in U$, the set of its \emph{good neighbors} is the set
\[
\V_u=\left\{v\in\V\colon \{u,v\}\in \redE\right\}
\]
of vertices that are connected to $u$ by red crossing edges.
\Cref{clm:comb} gives the following structural restriction on star
factors covered by the rectangle~$\R$.

\begin{clm}\label{clm:comb1}
For any star factor $F\in\f_{\R}$, and for any center $z$ of $F$, if $z\in\U$, then $|F\cap(\{z\}\times\V_z)|\leq m$.
\end{clm}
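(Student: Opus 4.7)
The plan is to reduce this claim to the already-established Claim \ref{clm:comb} by showing that every edge counted on the left-hand side must lie in a single red forest $A \in \A$. Since $F$ is covered by the rectangle, I would begin by fixing a witnessing spanning tree $T \in \R$ with $F \subseteq T$, and a corresponding decomposition $T = A \cup B$ with $A \in \A$ and $B \in \B$. All the reasoning will then be relative to this fixed pair $(A,B)$.

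Next, I would take an arbitrary edge $\{z,v\} \in F \cap (\{z\} \times \V_z)$ and chase definitions: $v \in \V_z$ means precisely that $\{z,v\}$ is a \emph{red} crossing edge, i.e.\ $\{z,v\} \in \redE \subseteq E_{\A}$. Because the edge-color classes $E_{\A}$ and $E_{\B}$ are disjoint (the edge is colored red only when it belongs to $E_{\A}$, never to $E_{\B}$), and because $\{z,v\} \in F \subseteq A \cup B$, the edge cannot lie in $B$ and must therefore lie in $A$. This shows the inclusion
\[
F \cap (\{z\} \times \V_z) \ \subseteq\ A \cap (\{z\} \times \V),
\]
using also $\V_z \subseteq \V$.

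Finally, since $z \in \U$ by hypothesis, I apply Claim \ref{clm:comb} with $u = z$ to conclude $|A \cap (\{z\} \times \V)| \leq m$, which gives the desired bound.

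No real obstacle is expected here: the whole content is packing the right definitions together. The only subtle point worth stating explicitly is that $F$ may be covered by several spanning trees of $\R$, so one must \emph{fix} one witnessing decomposition $T = A \cup B$ at the outset; the conclusion then follows for that particular $A$, which is all the claim requires.
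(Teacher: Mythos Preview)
Your proposal is correct and follows essentially the same approach as the paper's proof: fix a witnessing decomposition $F\subseteq A\cup B$, observe that edges in $\{z\}\times\V_z$ are red and hence cannot lie in $B$, so they lie in $A$, and then invoke Claim~\ref{clm:comb}. The paper phrases the color argument as $B\cap(\{z\}\times\V_z)=\emptyset$ rather than working edge-by-edge, but the content is identical.
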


That is, if a star factor $F$ is covered by the rectangle $\R$, then
every star of $F$ centered in some vertex $z\in \U$ can only have $m$
or fewer (out of all $|\V_z|$ possible) red crossing edges.

\begin{proof}
  Take a star factor $F\in\f_{\R}$ having some star whose center $z$
  belongs to~$\U$. Since $F$ is covered by the rectangle $\R$, $F\subseteq
  A\cup B$ holds for some forests $A\in\A$ and $B\in\B$. By the
  definition of the edge-coloring, we have
  $B\cap(\{z\}\times\V_z)=\emptyset$: all edges in $\{z\}\times\V_z$
  are red, while those in $B$ are blue.  So, all edges of
  $F\cap(\{z\}\times\V_z)$ belong to the forest $A$, and
  \cref{clm:comb} yields $|F\cap(\{z\}\times\V_z)|\leq
  |A\cap(\{z\}\times\V_z)|\leq m$.
\end{proof}

We call a vertex $u$ of $K_n$ \emph{rich} if $u\in\U$ and at least one
quarter of the vertices in $\V$ are good neighbors of $u$, that is, if
$|\V_u| \geq \tfrac{1}{4}|\V|$ holds. By \cref{eq:balance}, every rich
vertex $u$ has $|\V_u|\geq n/16$ good neighbors.  Split the family
$\f_{\R}$ of star factors covered by the rectangle $\R$ into the
family $\f^1_{\R}$ of star factors $F\in\f_{\R}$ with \emph{no} rich
center, and the family $\f^2_{\R}$ of all star factors $F\in\f_{\R}$
with \emph{at least} one rich center. We will upper-bound the number
of star factors in $\f^1_{\R}$ and in $\f^2_{\R}$ separately.

The intuition behind this splitting is that star factors
$F\in\f^1_{\R}$ have the restriction (given by \cref{lem:rich} below)
that only relatively ``few'' potential vertices of $K_n$ can be used
as centers of stars, while the restriction for the star factors
$F\in\f^2_{\R}$ (given by \cref{clm:comb1}) is that at least one of
its stars $S_z\subset F$ (centered in a rich center $z$) has
relatively ``few'' potential vertices of $K_n$ which can be taken as
leaves.

To upper-bound $|\f^1_{\R}|$, let us first show that the set
$\Good=\left\{u\in\U\colon |\V_u| \geq \tfrac{1}{4}|\V| \right\}$ of
all rich vertices is large enough.

\begin{clm}\label{lem:rich}
  There are $|\Good|\geq \tfrac{1}{4}|\U|\geq n/16$ rich vertices.
\end{clm}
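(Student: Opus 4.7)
The plan is to prove \cref{lem:rich} by a straightforward averaging (double counting) argument on the set $\redE$ of red crossing edges. Recall that by assumption $|\redE|\geq \tfrac{1}{2}|\U\times\V| = \tfrac{1}{2}|\U||\V|$, and the sets $\V_u$ are just the neighborhoods of vertices $u\in\U$ in the bipartite graph $(\U,\V,\redE)$. So the total degree sum from the $\U$-side gives
\[
\sum_{u\in\U}|\V_u| \;=\; |\redE| \;\geq\; \tfrac{1}{2}|\U||\V|\,.
\]

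Next I would split this sum according to whether $u$ is rich or not. For a rich vertex $u\in\Good$ we use the trivial bound $|\V_u|\leq |\V|$, while for a non-rich vertex $u\in\U\setminus\Good$ we use $|\V_u|<\tfrac{1}{4}|\V|$ by the very definition of non-richness. Substituting into the inequality above yields
\[
\tfrac{1}{2}|\U||\V| \;\leq\; |\Good|\cdot|\V| + \bigl(|\U|-|\Good|\bigr)\cdot\tfrac{1}{4}|\V|\,.
\]
Dividing through by $|\V|$ and solving for $|\Good|$ gives $|\Good|\geq \tfrac{1}{3}|\U|\geq \tfrac{1}{4}|\U|$, which is exactly the first inequality of the claim. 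The second inequality $|\Good|\geq n/16$ then follows directly from the lower bound $|\U|\geq n/4$ already established in~\cref{eq:balance}.

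There is no real obstacle here: the combinatorial content was already unpacked in the preceding definitions and in~\cref{eq:balance}, and the argument is a one-line averaging. The only thing to be mildly careful about is the strict-versus-weak inequality in the definition of ``rich'' when bounding $|\V_u|$ for non-rich $u$, but either convention gives a constant at least $1/4$.
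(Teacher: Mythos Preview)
Your argument is correct and is essentially the same double-counting on $\redE$ that the paper uses; the paper merely phrases it as a contradiction (assuming $|\Good|<\tfrac{1}{4}|\U|$ and bounding both sums to get $\tfrac{1}{2}|\U\times\V|<\tfrac{1}{2}|\U\times\V|$), whereas you solve the inequality directly and in fact obtain the slightly sharper $|\Good|\geq\tfrac{1}{3}|\U|$.
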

\begin{proof}
  The second inequality follows from~\cref{eq:balance}.  To prove the
  first inequality, assume contrariwise that there are only $|\Good| <
  \tfrac{1}{4}|\U|$ rich vertices in $\U$. Since $|\V_u| <
  \tfrac{1}{4}|\V|$ holds for every vertex $u \in \U \setminus \Good$,
  we obtain
  \begin{align*}
    \tfrac{1}{2}|\U\times\V|&\leq |\redE| = \sum_{u\in\Good}|\V_u|
    + \sum_{u\in\U\setminus\Good}|\V_u| < \tfrac{1}{4}|\U| \cdot
    |\V|+|\U| \cdot \tfrac{1}{4}|\V| = \tfrac{1}{2}|\U \times \V|\,,
  \end{align*}
  a contradiction.
\end{proof}

Each star factor in $\f^1_{\R}$ can be constructed in the same way as
we constructed any star factor $F \in \f$ above (before
\cref{eq:all}), with the difference that centers can only be chosen
from $[n] \setminus \Good$, not from the entire set $[n]$.  Thus,
\begin{equation}\label{eq:1}
  \frac{|\f^1_{\R}|}{|\f|} \leq \binom{n-|\Good|}{m}\cdot \binom{n}{m}^{-1} \leq \euler^{-|\Good| \cdot m/n }=2^{-\Omega(m)}\,.
\end{equation}
Here we used \cref{lem:rich} together with the second of the two
simple inequalities holding for all $b\leq b+x < a$:
\begin{equation}\label{eq:bollobas}
  \left(\frac{a-b-x}{a-x}\right)^x\leq \binom{a-x}{b}\binom{a}{b}^{-1}
  \leq \left(\frac{a-b}{a}\right)^x\,.
\end{equation}

To upper bound $|\f^2_{\R}|$, we will use the restriction given by
\cref{clm:comb}. Recall that every star factor $F\in\f^2_{\R}$ has at
least one rich center.  So, consider the following (nondeterministic)
procedure of constructing a star factor $F$ in $\f^2_{\R}$.

\begin{enumerate}
\item Choose a rich center $z\in\Good$; there are at most $|\Good|\leq
  |\U|\leq n$ possibilities to do this.

\item For the center $z$, do the following:
  \begin{enumerate}
  \item choose a subset of $i\leq m$ vertices from the set $\V_z$ of
    all good neighbors of $z$, and connect these vertices to $z$ by
    (crossing) edges; for each $i\leq m$ there are $\binom{|\V_z|}{i}$
    possibilities.
  \item choose a subset of $d-i$ vertices in $[n]\setminus
    (\V_z\cup\{z\})$ and connect them to $z$; here we have at most
    $\binom{n-|\V_z|-1}{d-i}\leq \binom{n-|\V_z|}{d-i}$ possibilities.
  \end{enumerate}
\item Choose a subset of $m-1$ distinct centers from the remaining
  $n-d-1$ vertices. There are at most $\binom{n-d-1}{m-1}\leq
  \binom{n-1}{m-1}=\frac{m}{n}\binom{n}{m}$ possibilities to do this.

\item Choose a partition of the remaining $n-m-d$ vertices into $m-1$
  blocks of size~$d$, and connect the $i$th largest of the $m-1$
  chosen centers to all vertices in the $i$th block. There are at most
  $\binom{n-m-d}{d,\ldots,d}=\frac{(n-m-d)!}{d!^{m-1}}$ possibilities
  to do this.
\end{enumerate}

\begin{clm}\label{clm:procedure}
  Every star factor $F\in\f^2_{\R}$ can be produced by the above
  procedure.
\end{clm}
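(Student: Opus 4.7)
The plan is to take an arbitrary star factor $F\in\f^2_\R$ and exhibit a specific sequence of choices in steps~(1)--(4) of the procedure that yields exactly $F$. Since $F$ decomposes uniquely into $m$ vertex-disjoint $d$-stars together with a distinguished set of centers, the procedure is essentially a recipe for reading off $F$ one center at a time; the only delicate part is step~(2), where the star of one selected rich center is split into its good-neighbor and non-good-neighbor leaves, and one must verify that the size of the good-neighbor part fits into the index range $i\leq m$ advertised in the procedure.

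First I would use the defining property of $\f^2_\R$: by assumption $F$ has at least one rich center, so I can pick some $z\in\Good$ that is a center of a star of $F$, fulfilling step~(1). Let $S_z\subset F$ be the unique $d$-star of $F$ centered at $z$, and split its $d$ leaves into $L_1:=S_z\cap(\{z\}\times\V_z)$ and the complementary set $L_2$. Setting $i:=|L_1|$, steps~(2a) and~(2b) select exactly $L_1\subseteq\V_z$ and $L_2\subseteq[n]\setminus(\V_z\cup\{z\})$, which together recover $S_z$. Once $S_z$ is committed, the remaining structure of $F$ is rigid: step~(3) picks the other $m-1$ centers of $F$, all of which lie among the $n-d-1$ vertices outside $S_z$, and step~(4) partitions the leftover $n-m-d$ vertices into the leaf-blocks of the remaining $m-1$ stars, assigning each block to its unique center. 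This reproduces $F$ edge for edge.

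The only step that is not pure bookkeeping is confirming that $i\leq m$ in step~(2a), so this is where I expect the ``main obstacle'' to live. But it is precisely the content of \cref{clm:comb1} applied to the rich center $z\in\U$ of the covered factor $F\in\f_\R$, so no genuine work remains here: the real combinatorial input has already been absorbed into \cref{clm:comb} and its consequence \cref{clm:comb1}, and \cref{clm:procedure} itself reduces to checking that the freedoms built into the procedure are exactly enough to describe every $F\in\f^2_\R$.
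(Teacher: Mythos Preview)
Your proposal is correct and mirrors the paper's own proof almost verbatim: pick a rich center $z$ of $F$, split the leaves of its star $S_z$ into those in $\V_z$ and the rest, invoke \cref{clm:comb1} to certify $i=|F\cap(\{z\}\times\V_z)|\leq m$ so that Step~2(a) is legitimate, and observe that Steps~3--4 can freely reconstruct the remaining $m-1$ stars of $F$. You have also correctly located where the only nontrivial input lies, namely the bound $i\le m$ supplied by \cref{clm:comb1}.
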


\begin{proof}
  Take a star factor $F\in\f_{\R}$ containing a star $S_z\subset F$
  centered in a rich vertex $z\in\Good$. The star $z$ can be picked
  by Step~1 of the procedure. By \cref{clm:comb1}, the star $S_z$ can
  only have $i:=|F\cap(\{z\}\times \V_z)|\leq m$ good neighbors of $z$ (those
  in $\V_z$) as leaves, and Step~2(a) of our procedure can pick all
  these $i$ leaves of $S_z$.  The remaining $d-i$ leaves of the star
  $S_z$ must belong to the set $[n]\setminus(\V_z\cup\{z\})$. So,
  Step~2(b) can pick these $d-i$ leaves of~$S_z$. Since the remaining
  two steps 3 and 4 of the procedure can construct \emph{any} star factor
  of $K_n\setminus S_z$, the rest of the star factor $F$ can be
  constructed by these steps.
\end{proof}

The number of possibilities in Step~2 of our procedure is related to
the probability distribution
\[
h(K,n,d,i):=\prob{X=i}=\frac{\binom{K}{i}\binom{n-K}{d-i}}{\binom{n}{d}}
\]
of a hypergeometric random variable $X$: the probability of having
drawn exactly $i$ white balls, when drawing uniformly at random
without replacement $d$ times, from a vase containing $K$ white and
$n-K$ black balls.  The number of possibilities in Step~2 of the
procedure (for a center $z$ picked in Step~1) is then at most $
H(|\V_z|,n,d,m)\cdot \binom{n}{d}$, where
\[
H(K,n,d,m):=\prob{X\leq m}=\sum_{i=0}^m h(K,n,d,i)\,,
\]
is the probability of having drawn at most $m$ white balls.  For fixed
$n,d$ and $m$, the function $H(K,n,d,m)$ is non-increasing in $K$,
implying that the maximum of $H(|\V_z|,n,d,m)$ over all rich centers
$z\in\Good$ is achieved for $K:=\min\{|\V_z|\colon
z\in\Good\}$. Hence, for every rich center $z\in\Good$, the number of
possibilities in Step~2 is at most
\[
H(|\V_z|,n,d,m)\cdot \binom{n}{d}\leq H\cdot \binom{n}{d}\,,
\]
where $H:=H(K,n,d,m)$. From the first inequality of \cref{eq:bollobas}
(applied with $x:=m$, $a:=n$ and $b:=d$) we have $\binom{n}{d} \leq
C\cdot \binom{n-m}{d}$, where $C=\left(\frac{n-m}{n-d-m}\right)^m \leq
\exp\left(\tfrac{md}{n-d-m}\right)$ is a constant since $md=O(n)$ and
$m,d=o(n)$.

Thus the total number of possibilities in all steps 1--4 and, by
\cref{clm:procedure}, also the number $|\f^2_{\R}|$ of star factors in
$\f^2_{\R}$, is at most a constant times
\begin{align*}
  &\underbrace{n\cdot H\cdot \binom{n-m}{d}}_{\text{Steps 1 and 2}}
  \underbrace{\frac{m}{n} \binom{n}{m}}_{\text{Step 3}}
  \underbrace{\frac{(n-m-d)!}{d!^{m-1}}}_{\text{Step 4}} =m\cdot
  H\cdot \underbrace{\binom{n}{m}\frac{(n-m)!}{d!^m}}_{=\ |\f|} \, .
\end{align*}
Known tail inequalities for the hypergeometric distribution (see
Hoeffding~\cite{hoeffding}, or Chv\'atal~\cite{chvatal} for a direct
proof) imply that, if $m\leq (K/n-\epsilon)d$ for $\epsilon >0$, then
\begin{equation}\label{eq:tail}
  H(K,n,d,m)= \prob{X\leq m}
  \leq \euler^{-2\epsilon^2 d}\,.
\end{equation}
\begin{rem}
  In both papers \cite{hoeffding} and \cite{chvatal}, this upper bound
  is only stated for the event $X\geq (K/n+\epsilon)d$, but using the
  duality $h(K,n,d,i)=h(n-K,n,d,d-i)$ (count black balls instead of
  white), the same upper bound holds also for the event $X\leq
  (K/n-\epsilon)d$.
\end{rem}

In our case, $K=\min\{|\V_z|\colon z\in\Good\}\geq
\tfrac{1}{4}|\V|\geq n/16$.  Recall that, so far, we have only used
the conditions $(d+1)m=n$ and $m=\Theta(\sqrt{n})$ on the parameters
$m$ and $d$. We now use the last condition $m\leq d/32$. For $\epsilon
=1/32$, we then have $m\leq d/32\leq (K/n-\epsilon)d$, and
\cref{eq:tail} yields
\[
H=H(K,n,d,m)\leq \prob{X\leq d/32}\leq
\euler^{-d/512}=2^{-\Omega(d)}\,.
\]
By taking $d:=6\sqrt{n}$ and $m:=n/(d+1)$, all three conditions on the
parameters $m$ and $d$ are fulfilled, and we obtain $|\f^2_{\R}|\leq
m|\f|\cdot 2^{-\Omega(d)}$.  Together with
the upper bound \cref{eq:1}, the desired upper bound on $|\f_{\R}|$
follows:
\[
\frac{|\f_{\R}|}{|\f|}=\frac{|\f^1_{\R}|+|\f^2_{\R}|}{|\f|}\leq
2^{-\Omega(m)}+m2^{-\Omega(d)}=2^{-\Omega(\sqrt{n})}\,. \qedhere
\]
\end{proof}

\small

\begin{thebibliography}{10}
\parskip0pt
\itemsep0pt 

\bibitem{bellman}
R.~Bellman.
\newblock On a routing problem.
\newblock {\em Quarterly of Appl. Math.}, 16:87--90, 1958.

\bibitem{chvatal}
V.~Chv\'atal.
\newblock The tail of the hypergeometric distribution.
\newblock {\em Discrete Math.}, 25:285--287, 1979.

\bibitem{dreyfus}
S.E. Dreyfus and R.A. Wagner.
\newblock The {S}teiner problem in graphs.
\newblock {\em Networks}, 1(3):195--207, 1971.

\bibitem{edmonds67}
J.~Edmonds.
\newblock Optimum branchings.
\newblock {\em J. of Res. of the Nat. Bureau of Standards}, 71B(4):233--240,
  1967.

\bibitem{edmonds73}
J.~Edmonds.
\newblock Edge-disjoint branchings.
\newblock In B.~Rustin, editor, {\em Combinatorial Algorithms}, pages 91--96.
  Academic Press, 1973.

\bibitem{floyd}
R.W. Floyd.
\newblock Algorithm 97, shortest path.
\newblock {\em Comm. ACM}, 5:345, 1962.

\bibitem{ford}
L.R. Ford.
\newblock Network flow theory.
\newblock Technical Report P-923, The Rand Corp., 1956.

\bibitem{held62}
M.~Held and R.M. Karp.
\newblock A dynamic programming approach to sequencing problems.
\newblock {\em SIAM J. on Appl. Math.}, 10:196--210, 1962.

\bibitem{hoeffding}
W.~Hoeffding.
\newblock Probability inequalities for sums of bounded random variables.
\newblock {\em J. of the Amer. Statistical Association}, 58(301):13--30, 1963.

\bibitem{hyafil}
L.~Hyafil.
\newblock On the parallel evaluation of multivariate polynomials.
\newblock {\em SIAM J. Comput.}, 8(2):120--123, 1979.

\bibitem{jerrum}
M.~Jerrum and M.~Snir.
\newblock Some exact complexity results for straight-line computations over
  semirings.
\newblock {\em J. ACM}, 29(3):874--897, 1982.

\bibitem{juk14}
S.~Jukna.
\newblock Lower bounds for tropical circuits and dynamic programs.
\newblock {\em Theory of Comput. Syst.}, 57(1):160--194, 2015.

\bibitem{kruskal}
J.B. Kruskal.
\newblock On the shortest spanning subtree of a graph and the traveling
  salesman problem.
\newblock {\em Proc. of AMS}, 7:48--50, 1956.

\bibitem{levin}
A.Y. Levin.
\newblock Algorithm for the shortest connection of a group of graph vertices.
\newblock {\em Sov. Math. Dokl.}, 12:1477--1481, 1971.

\bibitem{Moore1957}
E.F. Moore.
\newblock The shortest path through a maze.
\newblock In {\em Proc. Internat. Sympos. Switching Theory}, volume~II, pages
  285--292, 1957.

\bibitem{tarjan}
R.E. Tarjan.
\newblock Finding optimum branchings.
\newblock {\em Networks}, 7(1):25--35, 1977.

\bibitem{valiant80}
L.G. Valiant.
\newblock Negation can be exponentially powerful.
\newblock {\em Theor. Comput. Sci.}, 12:303--314, 1980.

\bibitem{warshall}
S.~Warshall.
\newblock A theorem on boolean matrices.
\newblock {\em J. ACM}, 9:11--12, 1962.

\end{thebibliography}

\end{document}